\documentclass{svproc}
\usepackage{graphicx}
\usepackage{amsmath,amssymb,amsfonts}
\usepackage{booktabs}
\usepackage{tikz}
\usepackage{pgfplots}
\pgfplotsset{compat=1.16}
\usetikzlibrary{positioning,arrows.meta}
\usepackage[hyphens]{url}

\begin{document}
\mainmatter

\title{Series and Parallel Dependency Links:\\ Vitality for Interdependent Networks of Networks}
\titlerunning{Series and Parallel Dependency Links}

\author{Cristian Curaba \and Stefano Grimaz}
\authorrunning{C. Curaba and S. Grimaz}

\institute{University of Udine, Udine, Italy,\\
\email{cristian.curaba@uniud.it}}

\maketitle

\begin{abstract}
In a network of networks, a dependency link runs from a supplier to the node it
serves and binds the supplier's failure to the node's failure. Indices computed from the
graph topology leave aside the service each edge carries, but in real
interdependent systems that service decides how failure propagates: a pumping
station requires electricity and telemetry, and losing either stops it; a
district fed by either of two reservoirs survives the loss of one.
Building on this observation, we label each node of the network with the service it provides.
We then measure how far this qualitative input can change the node's vitality ranking: 
by fixing the graph topology we show that changing only the labels moves the Mean Operativity Loss from $0.023$ to $0.114$,
while a ranking computed from the unlabelled graph topology is blind to the
difference. The links give a node its suppliers; the labels group those suppliers
into alternatives within one service and joint requirements across services. 
These labels should belong to the description of
an interdependent system alongside the links themselves for a fidelity vitality measure\footnote{Code for reproducibility: https://github.com/cascade-platform-org/CASCADE.}.

\keywords{networks of networks, dependency links, vitality,
interdependent infrastructure}
\end{abstract}

\section{Introduction}
\label{sec:intro}

Research on networks of networks separates two kinds of
edge~\cite{parshani_critical_2011,bashan_percolation_2011}. A \emph{connectivity
link} lets nodes function together as a network; a \emph{dependency link} binds
the failure of one node to the failure of another. We focus on dependency networks.

Dependency links can be interpreted in two ways. Under the first, failure propagates
along every dependency link. Parshani et
al.~\cite{parshani_critical_2011} write that for dependency links ``when a node
fails, his direct neighbors also fail (with some probability)''. Bashan et
al.~\cite{bashan_percolation_2011} put the same rule in group form, with the
``failure of an entire dependency groups [sic] due to a failure of one
member''.
Buldyrev et al.~\cite{buldyrev_catastrophic_2010} pair each node with exactly one
node of the other network, so that ``if node $A_i$ stops functioning owing to
attack or failure, node $B_i$ stops functioning''. Under the second, redundancy
rescues a node: Shao et al.~\cite{shao_cascade_2011} assume that ``each node in
one network can function only if it has at least a single support link connecting
it to a functional node in the other network''.

Reliability theory has a name for each interpretation. The first is a \emph{series system}: the
whole fails when any part fails. The second is a \emph{parallel} or
$1$-out-of-$n$ system: the whole survives while any one part
survives~\cite{rausand_system_2008}. Each rule quoted above is stated over all
the nodes or all the dependency links of its model, so within any one of those
papers every dependency link is a series element, or every one of them is
parallel.

Real interdependent systems generally contain links of both kinds, and what a
link delivers decides which kind it is. A pumping station requires electricity
\emph{and} telemetry, and losing either stops it: two links, two services, a
series system. A district fed by either of two reservoirs survives the loss of
one: two links, one service, a parallel system. The two pictures are
indistinguishable by node count, degree, or any other property of the graph
topology alone. This yields the observation the paper is built on:

\begin{quote}
Two dependency links entering a node are \emph{alternatives} when they carry the
same service and are \emph{jointly required} when they carry different services.
\end{quote}

The observation asks one qualitative label in the input graph---the service each node
provides---and this paper measures how far that input changes the node's vitality ranking. Three
things follow, one per section.

\emph{One qualitative label per node defines the requirements of every node}
(Section~\ref{sec:model}). The graph should record the service each node provides,
and one assumption carries the rest: a link into $v$ from a node providing
service $s$ makes $s$ a requirement of $v$. Given the labels, what $v$ requires
is topological: it is read off the incoming links. That set decides whether two
links entering $v$ are alternatives or are both needed, so it belongs to the
network's description from then on, alongside the links and the labels.

\emph{Node importance as vitality on the operativity} (Section~\ref{sec:vitality}).
Remove a node and measure the operativity lost. The requirements stay put while
the node goes, so the neighbours that depended on it still need what it supplied
and fail if no alternative is available.

\emph{The grouping of the supplier nodes is what the labels provides}
(Section~\ref{sec:benchmarks}). A graph topology gives the number and the
arrangement of the links entering a node; the labels partition them into
alternatives and joint requirements. Holding one graph topology fixed and varying
only the service labels moves the Mean Operativity Loss by a factor of five while
every index computed from the graph topology necessarily remains the same.

\section{Related Work}
\label{sec:related}

\paragraph{Dependency links.}
The connectivity/dependency distinction and the series reading are due to
Parshani et al.\ and Bashan et al.~\cite{parshani_critical_2011,bashan_percolation_2011},
building on the coupled-network percolation model of Buldyrev et
al.~\cite{buldyrev_catastrophic_2010} and its network-of-networks
generalisation~\cite{gao_networks_2012}. Shao et al.\ supply the parallel
reading, in which a node survives on any one of several support
links~\cite{shao_cascade_2011}; redundancy in dependency links is therefore
established. Radicchi and Bianconi place the threshold between the
two, holding a node functional while it functions in at least two of its layers,
so that further layers raise the system's
robustness~\cite{radicchi_redundant_2017}; that threshold is one number applied
to every node. A systematic review of frameworks published between 2005 and 2017
reports that the field concentrates on models in which ``a node in one layer
interacts with exactly one node at another layer'', and names multiple
interactions as an open opportunity~\cite{bachmann_survey_2020}. We use the
multilayer vocabulary of Kivel\"a et al.~\cite{kivela_multilayer_2014}: each set of nodes
grouped by the service they provide forms one layer.

\paragraph{Per-link variation.}
Each model above fixes one reading and applies it to every dependency link it
contains. A second line of work already varies the reading link by link, so that
one dependency link can behave differently from its neighbours. A
\emph{reinforced} link keeps its consumer running even after its supplier has
failed, so the dependency that link carries is switched
off~\cite{zhang_percolation_2024,li_percolation_2022}. The difference we make is
of another kind. Reinforcement decides whether a single link imposes its
dependency at all; a service label decides how two links entering the same node
combine, as alternatives or as a pair both of which are needed.

\paragraph{Graded operability.}
Functional Dependency Network Analysis (FDNA) propagates a continuous operability
through a supplier--provider graph~\cite{garvey_introduction_2010,garvey_modelling_2014},
after the inoperability input--output line~\cite{haimes_leontief-based_2001}.
Guariniello and DeLaurentis give the aggregation over a node's suppliers as
$O_j = \min(\mathrm{SOD}_{O_j}, \mathrm{COD}_{O_j})$: the first term averages
$\alpha_{ij}O_i + (1-\alpha_{ij})\mathrm{SE}_j$ over the suppliers $i$,
blending each one's operability with the node's own by the strength
$\alpha_{ij}$ of the link; the second minimises $O_i + \beta_{ij}$ over
them~\cite{guariniello_dependency_2013}. FDNA reaches a range of behaviours by
tuning those two numbers on every link, so its propagation is customised
numerically while one aggregation serves the whole model. The second term caps
$O_j$ at $O_i + \beta_{ij}$ for each supplier separately, so a single failed
supplier holds the node down whatever the others do, and lifting that cap means
setting the criticality $\beta_{ij}$ to its maximum, where the link imposes
nothing.

\paragraph{And--Or structure.}
The rule of Section~\ref{sec:model} can be reconstructed from a known object. Bundle the providers of
one service into a single arrow into the node requiring it, so that a node with
two required services has two arrows entering it and an arrow carries failure
only when every provider in its bundle has failed. Arrows that run from a set of
nodes to one node are the $B$-hyperarcs of Gallo et al., who use them for
And--Or graphs~\cite{gallo_directed_1993}. Growing a set from the nodes failed
at the outset, and adding a node whenever some arrow into it has its whole
bundle inside, closes on exactly the nodes the model leaves failed: failure here
is hypergraph reachability.

\paragraph{Vitality, and where its invariant comes from.}
Koschützki et al.\ define the class we use~\cite{koschutzki_centrality_2005}.
Choose any function that scores a whole network---its \emph{invariant
function}---and a node's centrality is how far that score drops when the node is
removed. The measure of Section~\ref{sec:vitality} fits this definition.
Different choices of invariant give different indices: Latora and Marchiori choose
network efficiency and apply the result to
infrastructure~\cite{latora_vulnerability_2005}. Skibski characterises the class
axiomatically~\cite{skibski_vitality_2021}. Which invariant to
choose is then the open question, and Borgatti gives the reason it
matters~\cite{borgatti_centrality_2005}: every centrality makes an assumption
about how the thing being measured moves through the network, and an index whose
assumption is wrong for the system at hand ranks the wrong nodes. 
Klemm et al.\ and Liu et al.\ score a node by how its state propagates under a spreading dynamics, 
so the ranking they produce is a function of the process they run~\cite{klemm_measure_2012,liu_locating_2016}.
The invariant of Section~\ref{sec:vitality} comes from a model in the same way: $N$ counts the nodes Eq.~(\ref{eq:model}) leaves operational.
\section{The Model}
\label{sec:model}

Let $G=(V,E,\sigma)$ be a directed graph in which each node $v$ provides one
\emph{service} $\sigma(v)$; the nodes providing service $s$ form the \emph{layer}
for $s$. A link $u \to v$ reads ``$u$ provides to $v$''.

The services a node requires are read off the network,
\begin{equation}
S_{\mathrm{req}}(v) \;=\; \{\, \sigma(u) \;:\; u \to v \,\},
\label{eq:requires}
\end{equation}
so $v$ requires service $s$ because a link points to it from a node providing $s$.
Write
\begin{equation}
P_s(v) \;=\; \{\, u \in V \;:\; u \to v, \;\; \sigma(u) = s \,\}
\label{eq:suppliers}
\end{equation}
for the suppliers of $s$ to $v$, so the incoming links of $v$ partition into the
groups $\{P_s(v)\}_{s \in S_{\mathrm{req}}(v)}$, one per required service. That
partition is what the labels contribute.
Equation~(\ref{eq:requires}) is evaluated once, on the network as given, and
$S_{\mathrm{req}}(\cdot)$ belongs to the description of the dependency graph from then on: we write
$D = (V, E, \sigma, S_{\mathrm{req}})$ for that description and work with it in place of
$G$ below.

Failure propagates through $D$ as a Boolean state on the nodes. An
\emph{initial state} $f_0 : V \to \{0,1\}$ gives every node $v$ a value $f_0(v)$,
equal to $1$ when $v$ is operational before propagation begins and $0$ when it has
already failed. We write $\mathbf{1}$ for the initial state that holds every node
operational, so $\mathbf{1}(v) = 1$ at every $v \in V$.
Propagation applies the same rule to every node in parallel and repeats it, round
$n+1$ reading the states of round $n$:
\begin{equation}
f_{n+1}(v) \;=\; f_n(v) \;\wedge \bigwedge_{s \,\in\, S_{\mathrm{req}}(v)} \;\;
          \bigvee_{u \,\in\, P_s(v)} f_n(u).
\label{eq:model}
\end{equation}
A node is operational in round $n+1$ when it was operational to begin with and
every service it requires continues to be available in round $n$ from at least one operational
supplier. The disjunction within a service models a parallel system: any one
operational supplier suffices, and the disjunction is false when every supplier in
$P_s(v)$ has failed, the empty $P_s(v)$ of a service with no supplier at all
included. The conjunction across services represents a series system: every service
a node requires must be met, and it is true over the empty $S_{\mathrm{req}}(v)$ of a
node with no incoming links, which therefore holds the state it started in. Every
node reads round $n$ to write round $n+1$, so $f_0$ determines the whole sequence.

The rule is monotone in the supplier states: switching a supplier off can only
switch its consumers off. Separately, its leading conjunct $f_n(v)$ bounds every
round from the previous state, so $f_{n+1} \le f_n$ pointwise and a node that fails
stays failed. A round that switches nothing off leaves every later round unchanged,
and any other round switches off at least one of the $|V|$ nodes, so the sequence
$f_0, f_1, f_2, \dots$ reaches a fixed point after at most $|V|$ rounds. Write
$f_\infty$ for the \emph{propagated state}, the fixed point it reaches, and $N(D)$ for the number of
nodes $f_\infty$ leaves operational.

\paragraph{What the model assumes.}
Four assumptions carry Eqs.~(\ref{eq:requires}) and~(\ref{eq:model}): each node
provides one service (easily generalizable to multiple services); 
the required services of each node are the ones labelling its suppliers; the suppliers
within $P_s(v)$ are interchangeable, so any one of them meets $s$ in full; and the
state is Boolean.
Section~\ref{sec:conclusion} returns to what these four leave out.

\begin{figure}[t]
\centering
\begin{tikzpicture}[
  nd/.style={circle,draw,minimum size=6.5mm,inner sep=0pt,font=\small},
  lk/.style={-{Latex[length=2mm]},thick},
  sv/.style={font=\scriptsize,fill=white,inner sep=1pt,pos=0.55}]

\node[nd] (a1) at (0,1.05) {$a$};
\node[nd] (a2) at (0,-0.15) {$b$};
\node[nd] (v)  at (2.4,0.45) {$v$};
\draw[lk] (a1) -- node[sv]{water} (v);
\draw[lk] (a2) -- node[sv]{water} (v);
\node[align=center,font=\footnotesize] at (1.2,-1.05)
  {$S_{\mathrm{req}}(v)=\{\textsf{water}\}$\\[-1pt] parallel: $v$ survives either loss};

\node[nd] (c1) at (6.0,1.05) {$p$};
\node[nd] (c2) at (6.0,-0.15) {$q$};
\node[nd] (w)  at (8.6,0.45) {$w$};
\draw[lk] (c1) -- node[sv]{power} (w);
\draw[lk] (c2) -- node[sv]{telemetry} (w);
\node[align=center,font=\footnotesize] at (7.3,-1.05)
  {$S_{\mathrm{req}}(w)=\{\textsf{power},\textsf{telemetry}\}$\\[-1pt] series: either loss stops $w$};

\draw[dashed,gray!60] (4.2,-1.55) -- (4.2,1.5);
\end{tikzpicture}
\caption{Two dependency structures sharing one graph topology. The services
carried by the links entering $v$ and $w$ make $v$ a parallel system and $w$ a
series system: $P_{\textsf{water}}(v)=\{a,b\}$ is one group of two alternatives,
while $P_{\textsf{power}}(w)=\{p\}$ and $P_{\textsf{telemetry}}(w)=\{q\}$ are two
groups that are jointly required.}
\label{fig:thesis}
\end{figure}

\section{Vitality on a Dependency Network}
\label{sec:vitality}

A \emph{vitality index} scores a node by how far some invariant function, chosen to
score a whole network, drops when that node is
removed~\cite{koschutzki_centrality_2005,skibski_vitality_2021}. The invariant is
the choice the analyst makes, and here it is $N(\cdot)$: the nodes
Eq.~(\ref{eq:model}) leaves operational from the fully operational initial state
$f_0 = \mathbf{1}$. Remove a node, and measure the operativity lost:
\begin{equation}
c_v(D) \;=\; \frac{N(D) - N(D \ominus v)}{|V|} \,,
\label{eq:index}
\end{equation}
where $D \ominus v$ deletes $v$ from $V$, its incident links from $E$, and its
label from $\sigma$, while $S_{\mathrm{req}}(\cdot)$ carries over unchanged: it records what
the surviving nodes require, and the absence of $v$ does not alter that. A consumer
$w$ drawing service $s$ from $v$ alone therefore keeps $s$ in $S_{\mathrm{req}}(w)$
while $P_s(w)$ loses its only member, which is the empty disjunction of
Section~\ref{sec:model} and fails $w$. Re-deriving the requirements from the
surviving links instead would leave every network fully operational and every node
scoring $1/|V|$.

The loss counts $v$ along with the nodes its absence brings down, so $c_v$ is the
share of the original nodes lost with $v$, $v$ included: a node whose failure
reaches no other scores the floor value $1/|V|$. In Section~\ref{sec:benchmarks} we will measure
the \emph{Mean Operativity Loss} $\Omega$, the average of $c_v$ over the network:
\begin{equation}
  \Omega(D) = \frac{\sum_{v \in V} c_v(D)}{|V|}.
\end{equation}

Eq.~(\ref{eq:index}) is written as a removal, while Eq.~(\ref{eq:model}) deletes
nothing: it propagates from an initial state. Write $N_{f_0(v)=0}(D)$ for the
operativity it reports on $D$ from the initial state with $f_0(v) = 0$ and
$f_0(u) = 1$ for every $u \neq v$. The two agree, so the vitality index of
Eq.~(\ref{eq:index}) is what the model reports when a single node is switched off.

\begin{proposition}
\label{prop:conditioning}
$N(D \ominus v) = N_{f_0(v)=0}(D)$ for every $v \in V$.
\end{proposition}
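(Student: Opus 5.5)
The plan is to compare the two propagations round by round. Write $g_n$ for the states on $D \ominus v$ started from $\mathbf{1}$, and $f_n$ for the states on $D$ started from $f_0(v)=0$, $f_0(u)=1$ for $u\neq v$. The claim I would prove by induction on $n$ is
\[
g_n(u)=f_n(u)\quad\text{for every } u\in V\setminus\{v\},
\]
together with $f_n(v)=0$ for all $n$.

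The second part is immediate from the leading conjunct of Eq.~(\ref{eq:model}): since $f_0(v)=0$, we get $f_{n+1}(v)\le f_n(v)$, so $v$ stays failed in every round. Given both parts, the fixed points agree off $v$. The set of operational nodes of $f_\infty$ is therefore exactly the set of operational nodes of $g_\infty$, because $v$ contributes nothing to the count in $D$ and is absent from $D\ominus v$. Hence $N(D\ominus v)=N_{f_0(v)=0}(D)$.

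For the induction, the base case holds because $g_0$ and $f_0$ are both $1$ off $v$. For the step, fix $u\neq v$. The key observation is that $S_{\mathrm{req}}(u)$ is the same set in $D$ and in $D\ominus v$, since the operator $\ominus$ carries it over unchanged. So the outer conjunction in Eq.~(\ref{eq:model}) ranges over the same services in both networks. For each $s\in S_{\mathrm{req}}(u)$, the supplier set in $D\ominus v$ is $P_s(u)\setminus\{v\}$, whereas in $D$ it is $P_s(u)$. I split into two cases.

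\emph{Case $v\notin P_s(u)$.} The two disjunctions range over the same suppliers, all different from $v$, so they agree by the induction hypothesis.

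\emph{Case $v\in P_s(u)$.} The disjunction in $D$ equals the one in $D\ominus v$ joined with $f_n(v)=0$. A false disjunct changes nothing, so the two agree. This includes the subcase where $v$ was the only supplier: there both disjunctions are false, the empty one in $D\ominus v$ by the convention of Section~\ref{sec:model}.

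Together with $g_n(u)=f_n(u)$, this gives $g_{n+1}(u)=f_{n+1}(u)$.

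The only delicate point is precisely that subcase. It is where the choice to freeze $S_{\mathrm{req}}$ under $\ominus$ is used, together with the convention that an empty disjunction is false. Had the requirements been re-derived from the surviving links, $s$ would drop out of $S_{\mathrm{req}}(u)$ in $D\ominus v$ and the equality would fail. I will state this dependence explicitly so the reader sees the proposition hinges on that definition. The rest is routine bookkeeping, and since both sequences stabilise within $|V|$ rounds, the equality for all $n$ passes directly to the fixed points.
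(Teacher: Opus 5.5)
Your proposal is correct and follows the paper's proof: $v$ stays failed by the leading conjunct, and the frozen $S_{\mathrm{req}}$ together with the supplier groups $P_s(w)\setminus\{v\}$ makes every $w\neq v$ obey the same update on both sides, so the fixed points coincide off $v$. Your explicit induction on rounds, including the empty-disjunction subcase, simply spells out the paper's step ``same rule from the same initial state gives the same fixed point''.
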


\begin{proof}
First, $v$ stays failed on the right: $f_n(v)$ is the leading conjunct of
Eq.~(\ref{eq:model}), so $f_0(v) = 0$ leaves $f_n(v) = 0$ in every round.

Next, take any other node $w$ and any service $s \in S_{\mathrm{req}}(w)$. If
$v \in P_s(w)$, then $v$ contributes $0$ to $\bigvee_{u \in P_s(w)} f_n(u)$, so that
disjunction takes the value of $\bigvee_{u \in P_s(w) \setminus \{v\}} f_n(u)$. The
smaller group is the one $w$ has in $D \ominus v$, because the deletion removes the
links of $v$ and no others; and $S_{\mathrm{req}}(w)$ is the same on both sides,
because $D \ominus v$ carries it. Each $w \neq v$ is therefore governed by the same
rule on both sides.

Both then run that rule from the same initial state and reach the same fixed point
on $V \setminus \{v\}$. Since $v$ is failed on one side and absent from the other,
it adds nothing to either count. \qed
\end{proof}

Proposition~\ref{prop:conditioning} is what lets the index be computed by setting
one Boolean state to $0$ and running
Eq.~(\ref{eq:model}), which is how every vitality reported in
Section~\ref{sec:benchmarks} is obtained. Two networks with the same nodes and the
same links can still give different values of $c_v(D)$, as Fig.~\ref{fig:thesis}
already shows: $c_v$ is a function of the labelled dependency graph.
Section~\ref{sec:benchmarks} measures how far apart the two can be driven.

\section{Benchmarks}
\label{sec:benchmarks}

We did not find a benchmark built to isolate the effect of service labels on a
dependency importance ranking, so we generate the networks ourselves. Each generator turns one knob and holds everything else fixed,
which is the standard way to isolate a single
mechanism~\cite{lancichinetti_benchmark_2008}, as widely adopted in the
literature~\cite{bachmann_survey_2020}.

Five indices computed from the graph topology serve as the comparison with the
vitality index $c_v$. \emph{Reachability} counts the nodes downstream of $v$ along
directed links, $v$ itself excluded: everything its failure could reach.
\emph{PageRank} is run at damping $0.85$ on the graph with every link reversed, so weight flows
from a consumer back to its suppliers and a node scores high when the nodes
depending on it score high. \emph{In-degree} counts the links entering $v$, that
is, how many suppliers it has, and \emph{out-degree} the links leaving it, how many nodes it supplies. 
\emph{Betweenness} is the share of directed shortest paths running through $v$ as an
interior node. All five are computed with NetworkX on the graph as given, and all
five are functions of that graph alone.

We will report the standard Kendall $\tau$ correlation between each of those five indices
and $c_v$, and the overlap between their top $10\%$ nodes. To handle ties, the correlation reported throughout is
Kendall's $\tau_b$, which carries the ties in its denominator:
\begin{equation}
\tau_b = \frac{n_c - n_d}{\sqrt{(n_0 - n_1)(n_0 - n_2)}},
\qquad n_0 = \tfrac{1}{2} n(n-1),
\end{equation}
where $n_c$ is the number of pairs of nodes whose order is the same in both
rankings, $n_d$ the number of pairs whose order is reversed, $n$ the number of
nodes, and $n_1$ and $n_2$ the numbers of pairs tied in the first and in the second
ranking.
The top-$10\%$ overlap is the share of the $k = \lfloor n/10 \rfloor$ highest-ranked
nodes that two rankings hold in common, $k = 6$ at the $n = 60$ used below. Where
nodes tie for the last places, each tied node enters with the probability a uniform
tie-break gives it, and the overlap reported is the expectation over those
tie-breaks, evaluated in closed form.

\subsection{Relabelling a Fixed Graph}
\label{sec:ablation}
This experiment holds one graph topology fixed and varies only the number of service labels.
Results are means over $20$ replicates; the bracketed intervals in the text are $95\%$ percentile
bootstrap intervals over those replicates.

In every replicate a link $u \to v$ makes $\sigma(u)$ a requirement of $v$ (Eq.~(\ref{eq:requires})),
so a node receiving a link from a supplier of service $X$ depends on $X$.
Each replicate draws one random acyclic graph on $n = 60$ nodes with some $109$
links and holds it fixed\footnote{The generator is in
\texttt{experiments/centrality/category\_ablation.py} of the repository named in
Section~\ref{sec:conclusion}.}. The only quantity that varies is $L$, the number
of services the labels are drawn from. At $L=1$ every supplier of a node provides
the same service; as $L$ grows the suppliers
increasingly provide distinct services and it becomes a series system. Every
index computed from the graph topology is therefore identical across $L$ changes. 

\begin{figure}[!t]
\centering
{\small
\begin{tabular}{@{}clcccccc@{}}
\toprule
$L$ & reading & $\Omega$ & $\tau$(reach) & $\tau$(PageRank) & $\tau$(out-deg) & $\tau$(betw.) & $\tau$(in-deg) \\
\midrule
1 & parallel    & 0.023 & 0.454 & 0.547 & 0.498 & 0.259 & $-0.134$ \\
2 &             & 0.049 & 0.693 & 0.736 & 0.717 & 0.404 & $-0.164$ \\
3 &             & 0.072 & 0.791 & 0.790 & 0.768 & 0.454 & $-0.156$ \\
5 &             & 0.097 & 0.845 & 0.833 & 0.811 & 0.472 & $-0.185$ \\
8 & near-series & 0.114 & 0.907 & 0.859 & 0.837 & 0.491 & $-0.189$ \\
\bottomrule
\end{tabular}}

\vspace{1ex}
\begin{tikzpicture}
\begin{axis}[
  width=0.80\textwidth, height=5.4cm,
  xlabel={number of services $L$}, ylabel={$\tau$ against $c_v$},
  xmin=0.5, xmax=8.5, ymin=-0.3, ymax=1.0,
  xtick={1,2,3,5,8}, ytick={-0.2,0.2,0.6,1.0},
  tick label style={font=\scriptsize}, label style={font=\footnotesize},
  grid=major, grid style={gray!20},
  legend style={font=\footnotesize, at={(0.99,0.34)}, anchor=east, draw=none,
                fill=white, fill opacity=0.85, text opacity=1}]
\addplot[thick,mark=*,mark size=1.5pt,black] coordinates
  {(1,0.454)(2,0.693)(3,0.791)(5,0.845)(8,0.907)};
\addlegendentry{reachability}
\addplot[thick,mark=square*,mark size=1.5pt,black,dashed] coordinates
  {(1,0.547)(2,0.736)(3,0.790)(5,0.833)(8,0.859)};
\addlegendentry{PageRank}
\addplot[thick,mark=diamond*,mark size=1.8pt,black,dashdotted] coordinates
  {(1,0.498)(2,0.717)(3,0.768)(5,0.811)(8,0.837)};
\addlegendentry{out-degree}
\addplot[thick,mark=o,mark size=1.5pt,black,densely dotted] coordinates
  {(1,0.259)(2,0.404)(3,0.454)(5,0.472)(8,0.491)};
\addlegendentry{betweenness}
\addplot[thick,mark=triangle*,mark size=2pt,black,dotted] coordinates
  {(1,-0.134)(2,-0.164)(3,-0.156)(5,-0.185)(8,-0.189)};
\addlegendentry{in-degree}
\end{axis}
\end{tikzpicture}
\caption{The graph held fixed, the service labels varying. $\Omega$
is the average of $c_v$ over all nodes and $\tau$ Kendall's rank correlation
between $c_v$ and the reported index computed from the graph topology. Within each
replicate the graph is identical at every $L$, so each of those indices is
constant throughout; the vitality ranking they are compared against is what moves.
The $\Omega$ column rises fivefold as the labels turn a parallel
network into a series one. Each cell is a mean over $20$ replicates.}
\label{fig:ablation}
\end{figure}

Figure~\ref{fig:ablation} reports the sweep. On an unchanged graph, relabelling
moves $\Omega$ from $0.023$ $[0.023, 0.024]$ to $0.114$ $[0.106, 0.124]$, a factor
of five, while every index computed from the graph topology holds the value it had
at $L=1$. Comparing the extremes directly, the rankings at $L=1$ and $L=8$ agree at
$\tau = 0.476$ $[0.447, 0.505]$ and their top-$10\%$ sets share $0.402$ $[0.346,
0.457]$ of their members, so relabelling moves three fifths of the six nodes
selected for protection. 

Running the same generator with up to four suppliers per
node in place of three gives $\tau = 0.399$ $[0.357, 0.440]$, an overlap of $0.301$
$[0.237, 0.363]$, and a loss rising from $0.021$ to $0.138$: a graph with more
incoming links per node increases the labels' influence.

\subsection{The Series Limit}
\label{sec:series}

Where Section~\ref{sec:ablation} shows that the labels move the ranking, this
experiment asks when a topology index still recovers it. It varies how much
redundancy the network holds and measures how much of the vitality $c_v$ ranking
survives in common topology indices. The knob is $\rho$, the share of required
services that have exactly one supplier: the generator gives each required service
of each node one supplier with probability $\rho$ and two with probability $1 -
\rho$, drawn independently, so $\rho$ is the expected share across the network. At
$\rho = 1$ every required service has a sole supplier, so losing any supplier
fails its consumers, reproducing the reading of Parshani et al.~\cite{parshani_critical_2011}. As
$\rho$ falls, required services acquire a second supplier and parallel structure
enters. At $\rho = 0$ every required service has two suppliers, single losses have no consequences.

A second knob, $p$, sets how many services a node requires: with probability $p$
it requires two and with probability $1 - p$ one, so $p$ sets the expected share of
nodes requiring two services while $\rho$ fixes how many suppliers each required
service has.\footnote{The generator is in
\texttt{experiments/centrality/regime\_probe.py} of the repository named in
Section~\ref{sec:conclusion}, run with \texttt{-{}-reps 20}}.
Each row of Table~\ref{tab:series} fixes $\rho$ and pools $20$ replicates at $p = 0$ with $20$ at $p = 0.5$, $40$ in all, so a
row reports one level of redundancy across both settings of how many services a
node requires. 
\begin{table}[!b]
\centering
\caption{Rank agreement between $c_v$ and five indices computed from the graph
topology, against the redundancy level $\rho$ defined above. $\tau$ is Kendall's
$\tau_b$. Each cell is a mean over the $40$ replicates pooled into its row; a $95\%$
percentile bootstrap interval over them reaches $\pm 0.03$, and $\pm 0.07$ in the
in-degree and betweenness columns.}
\label{tab:series}
{\small
\begin{tabular}{@{}lccccc@{}}
\toprule
$\rho$ & $\tau$(reach) & $\tau$(PageRank) & $\tau$(out-deg) & $\tau$(betw.) & $\tau$(in-deg) \\
\midrule
0.0 & 0.195 & 0.204 & 0.203 & $-0.048$ & $-0.212$ \\
0.2 & 0.378 & 0.430 & 0.403 & 0.163 & $-0.199$ \\
0.4 & 0.533 & 0.602 & 0.563 & 0.272 & $-0.222$ \\
0.6 & 0.670 & 0.725 & 0.682 & 0.422 & $-0.177$ \\
0.8 & 0.811 & 0.845 & 0.782 & 0.516 & $-0.224$ \\
1.0 & 1.000 & 0.960 & 0.900 & 0.626 & $-0.288$ \\
\bottomrule
\end{tabular}}
\end{table}

\emph{Reachability is exactly correct at the series limit.} The model sends failure
along a link only when the target has no other supplier of that service, and at
$\rho = 1$ no target has one, so every link into an operational node carries failure
whatever the rest of the network does. Removing $v$ therefore fails $v$ together
with every node reachable from it. Writing $R(v)$ for the nodes downstream of $v$
with $v$ itself excluded, which is the reachability index as defined above,
\begin{equation}
N(D) - N(D \ominus v) \;=\; 1 + |R(v)| ,
\label{eq:series}
\end{equation}
so $c_v = (1 + |R(v)|)/|V|$ is an affine function of $|R(v)|$ with the same two
constants at every node. The two therefore order the nodes identically: $\tau =
1.000$. The other four indices stop short, PageRank at
$0.960$ and out-degree at $0.900$. Eq.~(\ref{eq:series}) is checked node by node in
every replicate at $\rho = 1$, where it holds exactly. The limit is where a
deterministic spreading process is modelled: contamination entering a water network
reaches everything downstream of the point it enters.

\emph{Its accuracy falls as redundancy enters}, reaching $\tau = 0.195$ at
$\rho = 0$, and the decline is monotone across the six levels of $\rho$ swept here.
Each service that gains a second supplier makes one of the links carrying it inert
against a single failure, and a ranking computed from the graph topology counts that
link anyway. The error such a ranking makes is therefore a function of how the links
entering a node group into services: the graph gives their number and their
arrangement, and the grouping decides which of them a single failure travels along.

\section{Applicability, Limitations and Conclusion}
\label{sec:conclusion}

This paper follows one observation to its consequence: two dependency links
entering a node are alternatives when they carry the same service and are
jointly required when they carry different services. Section~\ref{sec:model}
turns that observation into requirements read off the links,
Section~\ref{sec:vitality} into a vitality index on the operativity, and
Section~\ref{sec:benchmarks} into a number: on an unchanged graph topology the
labels alone move $\Omega$ by a factor of five. 
This section sets out where the construction applies, what it assumes, and where it stops.

\paragraph{Where it applies.}
The network construction needs three ingredients: a set of services, directed
dependency links, and a notion of failure. In interdependent essential infrastructure the
services are the commodities delivered: a hospital requires electricity and
water, two required services, and it draws that electricity from either the grid
connection or a standby generator, two alternatives within one service. In a
software deployment the services are the components a request passes through: a
deployment answers when its database and its authentication service both answer,
and either replica of each serves. In each case the analyst supplies the label a
node carries, and Eq.~(\ref{eq:requires}) reads the requirements off the links.

What the analyst accepts along with the labels is the propagation rule. The
disjunction and conjunction of Eq.~(\ref{eq:model}) is the assumption made about
how failure combines at a node, and it is written where an analyst can accept it
or reject it, which is Borgatti's process
matching~\cite{borgatti_centrality_2005} made into a step of the work. An index
computed from the unlabelled graph topology makes its own assumption about how
failure moves, and makes it silently.

\paragraph{Where it stops.}
Eq.~(\ref{eq:model}) checks that each required service has an operational supplier,
and a supplier is operational when its own requirements are met. Support is local
in that sense, and it can close on itself: a group of nodes supplying one another
satisfies the rule from the inside and holds itself up after the source feeding the
group is gone. A loop of interconnected junctions behind a failed reservoir stays
operational here, which is wrong. Tracing a service back to where it originates is
what that case needs, and a flow-based propagation carrying quantity along the
links supplies it.

One property runs the other way: Eq.~(\ref{eq:model}) accepts any initial state
$f_0$, so a scenario with several nodes failed together can be analysed directly,
and the vitality index of Section~\ref{sec:vitality} is its single-node case.

In conclusion, the description of a network of networks should carry the service
each node provides alongside its dependency links. The links give a node its
suppliers; the labels group those suppliers into the services the node requires,
and that grouping decides, for every link entering the node, whether the link is a
series element or a parallel one, and through that the vitality of the node: on one
fixed graph topology, relabelling the services moves three fifths of the top-$10\%$
nodes. For system reliability, the labels decide which links are single points of failure:
a link is one when its supplier is the only provider of its service to the node it enters.
The same graph therefore tolerates more or fewer single failures depending on the labels,
and $\Omega$, the mean share of the system lost to one failure, rises from $0.023$ to
$0.114$. Ranking the nodes worth protecting begins with defining what each node provides.

\paragraph{Data and code availability.}
The generators, the propagation model, and the scripts reproducing every number
reported in Section~\ref{sec:benchmarks} are available at
\url{https://github.com/cascade-platform-org/CASCADE}.

\bibliographystyle{spmpsci}
\bibliography{refs}

\end{document}